\newcommand{\R}{\mathbb{R}}
\newcommand{\norm}[1]{\lVert#1\rVert}
\newcommand{\st}{\mbox{s.t.}}
\newtheorem{theorem}{Theorem}
\newtheorem{definition}{Definition}
\newcommand{\va}{\mathbf{a}}
\newcommand{\ve}{\mathbf{e}}
\newcommand{\vu}{\mathbf{u}}
\newcommand{\vw}{\mathbf{w}}
\newcommand{\vx}{\mathbf{x}}
\newcommand{\vy}{\mathbf{y}}
\newcommand{\vz}{\mathbf{z}}
\newcommand{\mA}{\mathbf{A}}
\newcommand{\cH}{\mathcal{H}}
\newcommand{\vA}{\mathbf{A}}
\DeclareMathOperator*{\argmin}{argmin}
\DeclareMathOperator*{\supp}{supp}
\begin{document}

\title{Stochastic Natural Thresholding Algorithms
}

\author[1]{Rachel Grotheer}
\author[2]{Shuang Li}
\author[3]{Anna Ma}
\author[2]{Deanna Needell}
\author[4]{Jing Qin\thanks{Corresponding author: \texttt{jing.qin@uky.edu}}}
\affil[1]{Dept. of Mathematics, Wofford College, Spartanburg, USA}
\affil[2]{Dept. of Mathematics, University of California, Los Angeles, USA}
\affil[3]{Dept. of Mathematics, University of California, Irvine, USA}
\affil[4]{Dept. of Mathematics, University of Kentucky, Lexington, USA}

\date{}
\maketitle

\begin{abstract}
Sparse signal recovery is one of the most fundamental problems in various applications, including medical imaging and remote sensing. Many greedy algorithms based on the family of hard thresholding operators have been developed to solve the sparse signal recovery problem. More recently, Natural Thresholding (NT)
has been proposed with improved computational efficiency. This paper proposes and discusses convergence guarantees for stochastic natural thresholding algorithms by extending the NT from the deterministic version with linear measurements to the stochastic version with a general objective function. We also conduct various numerical experiments on linear and nonlinear measurements to demonstrate the performance of StoNT.
\end{abstract}

\section{Introduction}
In various fields, such as machine learning, computer vision, and signal processing, there is a widespread need to make inferences about data with a high number of dimensions, even when only limited measurements are available.
Effective algorithms for data inference from a limited number of measurements often rely on the observation that even though most real-world data exists in high-dimensional spaces, they often possess a low-dimensional complexity, such as sparsity.
Many signal recovery algorithms have been developed to exploit sparsity with promising effectiveness and efficiency for data inference and recovery.

In the sparse signal recovery, the underlying data $\vx \in \mathbb{R}^{n}$ is typically recovered by solving an optimization problem of the form
\begin{equation}
\min_\vx f(\vx) \quad\st\quad ||\vx||_0 \leq k,
\label{eq:genobj}
\end{equation}
where the objective function, $f(\vx)$, measures the model discrepancy and $k$ is a preassigned sparsity level of $\vx$.
For example, compressed sensing assumes the measurements are linearly related to the underlying sparse signal up to noise, which has the objective function
\begin{equation}
    f(\vx) = || \mA\vx - \vy||_2^2
\label{eq:linobj}
\end{equation}
where $\mA \in \mathbb{R}^{m\times n}$ is the sensing matrix, and $\vy := \mA\vx + \nu\in \mathbb{R}^{m}$ is the vector of measurements, with Gaussian noise $\nu$.

The optimization problem in~(\ref{eq:genobj}) can be solved using greedy iterative methods that employ thresholding operators.
Thresholding algorithms are particularly effective at solving these optimization problems due to their low computational complexity.
To enforce the sparsity, a thresholding operator is usually involved to either restrict the support of the estimated solution at each iteration with a fixed cardinality or approximate the support of the actual solution through iterations. For example, Iterative Hard Thresholding (IHT)~\cite{blumensath2009iterative} and its variants \cite{herrity2006sparse, blumensath2008iterative, foucart2011hard}, and Gradient Matching Pursuit (GradMP)~\cite{nguyenunified} which are based on the hard thesholding operator have shown the promising performance in many applications.
Several other types of thresholding operators exist, such as soft thresholding~\cite{donoho1995noising, bredies2008linear} and optimal $k$-thresholding (OT)\cite{zhao2020optimal}. More recently, natural thresholding~\cite{zhao2022natural} has been proposed to significantly reduce the computational cost of OT.

Specifically, to solve \eqref{eq:genobj}, the application of gradient descent and thresholding operator yields the IHT with the following iterative algorithm
\[
\vx^{(i+1)}=\cH_k(\vx^{(i)}-\lambda \nabla f(\vx^{(i)}))
\]
where $\cH_k$ is a hard thresholding that sets all but the largest $k$ components of a vector to zero, $\nabla f(\vx^{(i)})$ is the gradient of $f$ at $\vx^{(i)}$, and $\lambda>0$ is the step size. In the linear case \eqref{eq:linobj}, $\nabla f(\vx)=\mA^T(\mA\vx-\vy)$. However, the IHT type of algorithms easily cause numerical instability when the hard thresholding is independent of the objective function, especially in the linear case \cite{zhao2020optimal}. To address this issue, OT selects the $k$ components of a vector that achieves the least residual among all possible $k$-sparse selections.

To further enhance the performance of OT, the Natural Thresholding algorithm (NT) restricts the gradient of the regularized objective function of the OT given in~\cite{zhao2020optimal} to its $k$-smallest elements.
The regularized objective function is given by
\begin{equation}
g_\alpha(\vw) = ||\vy-\mA(\vu\otimes \vw)||_2^2 + \alpha \phi(\vw),
\end{equation}
where $\vu\in \R^n$ is a given vector, $\otimes$ is the Hadamard multiplication, $\vw$ is a binary vector, $\alpha$ is the regularization parameter and $\phi(\vw)$ is the regularization function that enforces the binary condition on $\vw$.
The Natural Thresholding Pursuit algorithm (NTP) is an extension of the NT algorithm that includes an orthogonal projection in the last step. Both algorithms are given in Algorithm~\ref{alg:NTP}, where a general objective function is used while only a linear objective function was presented in~\cite{zhao2022natural}.

\begin{algorithm}
\caption{Natural Thresholding (NT) and Natural Thresholding Pursuit (NTP)}\label{alg:NTP}
\begin{algorithmic}
\STATE{\textbf{Inputs:} $\vx^{(0)}$, sparsity level $k$, stepsize $\lambda$, tolerance $\varepsilon$, regularization parameter $\alpha>0$, maximum number of iterations $T$}
\FOR{$i=1,2,\ldots,T$}
\STATE
\vspace{-0.5cm}
\begin{align*}
&\vu^{(i)}=\vx^{(i)}-\lambda\nabla f(\vx^{(i)})\\
&\vw^-=\argmin_{\vw\in\{0,1\}^n}\norm{\vw-\vu^{(i)}}_2\\
&\nabla g_{\alpha}(\vw^-)=\nabla f(\vw^-\otimes \vu^{(i)})+\alpha\nabla \phi(\vw^-)\\
&\vw^+=\argmin_{\vw\in\{0,1\}^n,\,\ve^T\vw=k}\nabla g_\alpha(\vw^-)^T\vw\\
&S^{(i)}=\supp(\vw^+\otimes \vu^{(i)})
\\
&\vx^{i+1}=
\left\{
\begin{aligned}
&\vw^+\otimes \vu^{(i)} &&(NT)\\
&\argmin_{\supp(\vz)\subseteq S^{(i)}}f(\vz) &&(NTP)\\
\end{aligned}
\right.
\end{align*}
\ENDFOR
\end{algorithmic}
\end{algorithm}

When the data size is growing, stochastic versions of these thresholding algorithms, such as stochastic GradMP (StoGradMP) and stochastic IHT (StoIHT)~\cite{nguyen2017linear}, have the benefit of reduced computational complexity and running time.
Here the objective function $f(\vx)$ is assumed to be separable, that is,
$f(\vx) = \sum_i f_i(\vx).$
At each iteration, a small subset of indices $n_i$ are randomly chosen, and the gradient is computed only for the $f_i$ where $i \in n_i$.

In this paper, we propose two new algorithms--stochastic Natural Thresholding (StoNT) and Stochastic Natural Thresholding Pursuit (StoNTP).
These algorithms are the respective stochastic version of NT and NTP proposed by~\cite{zhao2022natural}.
The convergence of our algorithm is discussed when solving~(\ref{eq:genobj}) with the objective function given by~(\ref{eq:linobj}). A variety of numerical simulations have shown that StoNTP converges faster than the NTP algorithm with proper parameters.

\section{Stochastic Iterative Natural Thresholding}
\label{sec:algintro}
Before introducing our algorithms, we provide the necessary assumptions for the objective function.
First, we require that $f$ satisfy the restricted strong convexity (RSC) condition and that each of the $f_i$ satisfy the restricted strongly smooth condition.

\begin{definition}[RSS]
A function $f:\R^n\to \R$ is called restricted strongly smooth (RSS) with a constant $\rho_k^+>0$ if the following condition is satisfied
\[
\norm{\nabla f(\vx)-\nabla f(\vx')}_2\leq \rho_k^+\norm{\vx-\vx'}_2
\]
for any $\vx,\vx'\in\R^n$ with $|\supp(\vx')\cup\supp(\vx)|\leq k$.
\end{definition}

\begin{definition}[RSC]
A function $f:\R^n\to\R$ is called restricted strongly convexity (RSC) with a constant $\rho_k^->0$ if the following condition is satisfied:
\[
f(\vx')-f(\vx)-\langle \nabla f(\vx),\vx'-\vx\rangle\geq
\frac{\rho^-_k}2\norm{\vx'-\vx}_2^2
\]
for any $\vx',\vx\in\R^n$ with $|\supp(\vx')\cup\supp(\vx)|\leq k$.
\end{definition}

\subsection{Proposed Algorithms}
Given a function $f:\mathbb{R}^n\to\R$ which is differentiable and separable, i.e.,
\[
f(\vx)=\sum_{i=1}^n f_i(\vx),\quad n\in\mathbb{N},
\]
we consider the sparsity-constrained minimization problem
\begin{equation}\label{eqn:model}
\min_{\vx} f(\vx)\quad\st\quad
\norm{\vx}_0\leq k
\end{equation}
where $k\in\{1,2,\ldots,n\}$.  By letting $\vx=\vu\otimes \vw$, the sparsity constraint can be recast as
\[
\ve^T\vw= k,\quad \vw\in\{0,1\}^n,
\]
where $\ve=[1,1,\ldots,1]^T\in\R^n$.

We propose two new algorithms, the Stochastic Natural Thresholding (StoNT) algorithm, and the Stochastic Natural Thresholding Pursuit (StoNTP) algorithm, described in Algorithm~\ref{alg:stoNTP}.

\begin{algorithm}
\caption{Stochastic Natural Thresholding (StoNT) and Stochastic Natural Thresholding Pursuit (StoNTP)}\label{alg:stoNTP}
\begin{algorithmic}
\STATE{\textbf{Inputs:} $\vx^{(0)}$, sparsity level $k$, stepsize $\lambda$, probability $p(n_i)$, tolerance $\varepsilon$, regularization parameter $\alpha>0$, maximum number of iterations $T$}
\FOR{$i=1,2,\ldots,T$}
\STATE
\vspace{-0.5cm}
\begin{align*}
&\mbox{Randomly select an index or a batch of indices }n_i\\
&\mbox{ with a probability } p(n_i)\\
&\vu^{(i)}=\vx^{(i)}-\frac{\lambda}{np(n_i)}\nabla f_{n_i}(\vx^{(i)})\\
&\vw^-=\argmin_{\vw\in\{0,1\}^n}\norm{\vw-\vu^{(i)}}_2\\
&\nabla g_{\alpha}(\vw^-)=\nabla f(\vw^-\otimes \vu^{(i)})+\alpha\nabla \phi(\vw^-)\\
&\vw^+=\argmin_{\vw\in\{0,1\}^n,\,\ve^T\vw=k}\nabla g_\alpha(\vw^-)^T\vw\\
&S^{(i)}=\supp(\vw^+\otimes \vu^{(i)})
\\
&\vx^{i+1}=\left\{\begin{aligned}
&\vw^+\otimes \vu^{(i)}&&(StoNT)\\
&\argmin_{\supp(\vz)\subseteq S^{(i)}}f(\vz)&&(StoNTP)\\
\end{aligned}\right.
\end{align*}
\ENDFOR
\end{algorithmic}
\end{algorithm}

\section{Theoretical Guarantees}
\label{sec:theory}

In this section, we will focus on the linear measurement case for convergence analysis, which can be further extended to the nonlinear case. Consider $f(\vx) = \|\mA \vx - \vy \|_2^2$ where $ \vy = \mA\vx^* + \nu$ and $\|\vx^*\|_0 \leq k$ where $\mA \in \mathbb{R}^{m\times n}$ ($m \ll n$) satisfies the RIP Condition for $k$-sparse vectors with RIP constant $\delta_k$.

\begin{theorem} (Linear Convergence of StoIHT~\cite[Theorem 1] {nguyen2017linear}) Let $\vx_s$ be a feasible solution of
$$\min_{\vx} \frac{1}{m} \sum_{i=1}^m f_i(\vx) \quad\st\quad \| \vx\|_0 \leq k. $$
Suppose that $i \sim [m]$ with probability $p(i)$ and let
$$\vx^{t+1} = \mathcal{H}_k \left(\vx^t - \frac{\lambda}{mp(i)} \nabla f_{i}(\vx_t)\right).$$
If $\lambda < 2/\alpha_{3k}$ then:
\begin{equation}   \mathbb{E}\|\vx^{t+1} - \vx_S \|_2 \leq \kappa \|\vx^t - \vx_S \|_2 + \sigma_{\vx_S},
\end{equation}
where $\kappa$ and $\sigma_{\vx_S}$ are constants that depend on the RSS and RSC constant and $\alpha_k = \max_i \frac{\rho_k^+(i)}{mp(i)}$.
\label{thm:stoIHT}
\end{theorem}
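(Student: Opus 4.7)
The result is Theorem~1 of~\cite{nguyen2017linear}, so what follows is a sketch of that argument. The plan rests on two structural facts: the hard thresholding operator $\cH_k$ returns the best $k$-sparse approximation of its argument, so $\norm{\cH_k(\vu)-\vu}_2 \leq \norm{\vz - \vu}_2$ for every $k$-sparse $\vz$; and the rescaled stochastic gradient is unbiased, $\E_i[\tfrac{1}{mp(i)}\nabla f_i(\vx^t)] = \nabla f(\vx^t)$. Writing $\vu^t = \vx^t - \tfrac{\lambda}{mp(i)}\nabla f_i(\vx^t)$ and $\Gamma = \supp(\vx^{t+1}) \cup \supp(\vx^t) \cup \supp(\vx_S)$ so that $|\Gamma|\leq 3k$, the thresholding inequality combined with a triangle inequality gives
\[
\norm{\vx^{t+1} - \vx_S}_2 \leq 2\norm{(\vu^t - \vx_S)_\Gamma}_2,
\]
which reduces the problem to analyzing a $3k$-restricted stochastic gradient step.

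Next, I would add and subtract $\lambda(\nabla f(\vx^t))_\Gamma$ on the right-hand side to split the restricted step into its conditional mean $(\vx^t - \vx_S) - \lambda(\nabla f(\vx^t))_\Gamma$ plus a mean-zero fluctuation $\lambda\bigl((\nabla f(\vx^t))_\Gamma - \tfrac{1}{mp(i)}(\nabla f_i(\vx^t))_\Gamma\bigr)$. Taking the conditional expectation of the squared norm, unbiasedness kills the cross term and leaves a deterministic descent piece plus a variance piece scaled by $\lambda^2$.

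For the deterministic piece I would expand the square and invoke RSC with constant $\rho_{2k}^-$ on $\langle \nabla f(\vx^t) - \nabla f(\vx_S),\vx^t - \vx_S\rangle$, together with RSS with constant $\rho_{3k}^+$ on the restricted gradient difference. The residual term $(\nabla f(\vx_S))_\Gamma$, which reduces to $\mA^T\nu$ restricted to $\Gamma$ in the linear model, is absorbed into the additive constant $\sigma_{\vx_S}$. For the variance piece, RSS of each $f_i$ bounds $\norm{\nabla f_i(\vx^t) - \nabla f_i(\vx_S)}_2$ on the at most $3k$-sparse difference, producing the factor $\alpha_{3k}=\max_i \rho_{3k}^+(i)/(mp(i))$. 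Consolidating these estimates gives a one-step recursion
\[
\E\bigl[\norm{\vx^{t+1}-\vx_S}_2^2 \,\big|\, \vx^t\bigr] \leq \kappa^2 \norm{\vx^t - \vx_S}_2^2 + C\,\sigma_{\vx_S}^2,
\]
and a Jensen (or direct square-root) inequality delivers the stated bound on $\E\norm{\vx^{t+1}-\vx_S}_2$.

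The hard part will be the simultaneous use of RSC on the aggregate $f$ and RSS on the individual $f_i$ while maintaining $k$-sparsity throughout: the variance of the stochastic gradient on $\Gamma$ must be controlled without inflating $\kappa$ past one, and this is precisely the balance that forces the step-size restriction $\lambda < 2/\alpha_{3k}$. Careful bookkeeping on the support set $\Gamma$ of size at most $3k$, together with clean absorption of all residual noise-like quantities into $\sigma_{\vx_S}$, is what makes the final constants $\kappa$ and $\sigma_{\vx_S}$ come out as stated.
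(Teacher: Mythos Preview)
The paper does not prove this statement at all: it is quoted verbatim as \cite[Theorem~1]{nguyen2017linear} and then invoked as a black box in the proof of Theorem~2. So there is nothing in the present paper to compare your sketch against.

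That said, your outline is broadly faithful to the argument in the cited reference, with one caveat worth flagging. You propose to square the norm, split the stochastic gradient into mean plus fluctuation, and use unbiasedness to kill the cross term. The difficulty is that the support set $\Gamma$ you restrict to contains $\supp(\vx^{t+1})$, which depends on the random index $i$ drawn at step $t$; hence $\Gamma$ is itself random, and the inner product of the ``deterministic'' piece with the fluctuation is not a priori mean zero under $\E_i$. The proof in \cite{nguyen2017linear} sidesteps this by working directly with the norm (not its square), adding and subtracting $\tfrac{\lambda}{mp(i)}\nabla f_i(\vx_S)$ rather than the full gradient, and bounding each resulting term via a co-coercivity lemma and a supremum over all $3k$-sparse index sets before taking expectation. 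This yields the contraction factor $\kappa$ and the tolerance $\sigma_{\vx_S}$ without ever needing the cross term to vanish. Your route can be repaired along similar lines, but as written the ``unbiasedness kills the cross term'' step is where it would break.
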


\begin{theorem}
Assume the rows of $\vA$ have unit norms. Consider Algorithm \ref{alg:stoNTP} with batch size bs=1, and choose $\lambda < 2/\alpha_{3k}$ where $\alpha_{3k} = \max_i \frac{\rho_{3k}^+(i)}{mp(i)}$. Then
$$ \mathbb{E} \|\vx_S - \vx^{(p+1)} \|_2 \leq \kappa_\text{new} \| \vx_S - \vx^{(p)}  \|^2 + \sigma_\text{new},$$
where $\kappa_\text{new} = \sqrt{\frac{1 + \delta_{2k}}{1 - \delta_{2k}}} \kappa$ and $\sigma =  \frac{\sqrt{1 + \delta_{2k}}\sigma_{\vx_S} + 2\| \nu'\|_2}{\sqrt{1 - \delta_{2k}}}.$
\end{theorem}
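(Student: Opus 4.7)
The plan is to reduce the StoNT error bound to the StoIHT bound of Theorem~\ref{thm:stoIHT} via the RIP of $\mA$. Because both $\vx^{(p+1)} = \vw^+\otimes\vu^{(p)}$ and $\vx_S$ are $k$-sparse, the difference $\vx_S-\vx^{(p+1)}$ is at most $2k$-sparse, so the lower RIP inequality combined with $\vy = \mA\vx_S + \nu$ gives
\[
\|\vx_S - \vx^{(p+1)}\|_2 \leq \frac{1}{\sqrt{1-\delta_{2k}}}\,\|\mA(\vx_S - \vx^{(p+1)})\|_2 \leq \frac{1}{\sqrt{1-\delta_{2k}}}\bigl(\|\vy - \mA\vx^{(p+1)}\|_2 + \|\nu\|_2\bigr).
\]
Thus it remains to control the StoNT residual $\|\vy - \mA\vx^{(p+1)}\|_2$.

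To do so, I compare the StoNT update with the StoIHT candidate $\tilde\vx^{(p+1)} := \cH_k(\vu^{(p)}) = \hat\vw\otimes\vu^{(p)}$, where $\hat\vw\in\{0,1\}^n$ indicates the top-$k$ absolute entries of $\vu^{(p)}$ and is a feasible competitor for the $\vw^+$-subproblem. The NT step guarantees the linear minimality $\nabla g_\alpha(\vw^-)^\top \vw^+ \leq \nabla g_\alpha(\vw^-)^\top \hat\vw$. Using an RSS upper bound of $g_\alpha$ at $\vw^-$ together with RSC as a lower bound, this linear-surrogate inequality can be converted into a residual comparison of the form
\[
\|\vy - \mA\vx^{(p+1)}\|_2 \leq \|\vy - \mA\tilde\vx^{(p+1)}\|_2 + \|\nu'\|_2,
\]
where $\|\nu'\|_2$ absorbs the curvature/linearization gap and depends on $\alpha$ and on $\phi$. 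The unit-norm row assumption on $\mA$ is what lets the smoothness constants of the quadratic part of $g_\alpha$ be uniformly controlled by $1+\delta_{2k}$. Since $\vx_S-\tilde\vx^{(p+1)}$ is again $2k$-sparse, the upper RIP yields
\[
\|\vy - \mA\tilde\vx^{(p+1)}\|_2 \leq \sqrt{1+\delta_{2k}}\,\|\vx_S - \tilde\vx^{(p+1)}\|_2 + \|\nu\|_2.
\]

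Finally, because $\tilde\vx^{(p+1)} = \cH_k(\vu^{(p)})$ is exactly one StoIHT step from $\vx^{(p)}$ with the same random index $n_p$ and stepsize $\lambda$, Theorem~\ref{thm:stoIHT} applies and gives $\E\|\vx_S - \tilde\vx^{(p+1)}\|_2 \leq \kappa\|\vx_S - \vx^{(p)}\|_2 + \sigma_{\vx_S}$. Chaining the three inequalities above and taking expectation produces the claimed bound
\[
\E\|\vx_S-\vx^{(p+1)}\|_2 \leq \sqrt{\frac{1+\delta_{2k}}{1-\delta_{2k}}}\,\kappa\,\|\vx_S-\vx^{(p)}\|_2 + \frac{\sqrt{1+\delta_{2k}}\,\sigma_{\vx_S} + 2\|\nu'\|_2}{\sqrt{1-\delta_{2k}}}.
\]
The hard part is the middle step: translating the purely linear-in-$\vw$ optimality of $\vw^+$ into a quadratic residual inequality against $\hat\vw$. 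This is where all the structural features of NT (the rounding $\vw^-$, the regularizer $\phi$, and the interaction between $\alpha$ and the RSC/RSS constants of the quadratic term) must be used carefully, and any looseness in this conversion shows up in the noise term $\|\nu'\|_2$.
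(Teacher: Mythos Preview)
Your proposal follows the same two-step strategy as the paper: reduce $\|\vx_S-\vx^{(p+1)}\|_2$ to $\|\vx_S-\cH_k(\vu^{(p)})\|_2$ via the deterministic NT analysis, then invoke Theorem~\ref{thm:stoIHT} for the latter. The paper simply quotes the first reduction as Eq.~(34) of~\cite{zhao2022natural}, whereas you unpack it through the lower/upper RIP sandwich and the residual comparison; what you flag as ``the hard part'' is exactly the content of that cited inequality.
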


\begin{proof}
Starting with Eq. (34) in ~\cite{zhao2022natural}, we have:
\begin{align*}
    &\mathbb{E} \|\vx_S - \vx^{(p+1)} \|_2 \\
    &\leq \sqrt{\frac{1 + \delta_{2k}}{1 - \delta_{2k}}} \mathbb{E}  \|\vx_S - \mathcal{H}_k(u^{(p)}) \|_2 +  \frac{2\| \nu'\|_2}{\sqrt{1-\delta_{2k}}} \\
    &\leq \sqrt{\frac{1 + \delta_{2k}}{1 - \delta_{2k}}} \kappa \| \vx_S - \vx^{(p)} \|_2 + \frac{\sqrt{1 + \delta_{2k}}\sigma_{\vx_S} + 2\| \nu'\|_2}{\sqrt{1 - \delta_{2k}}}.
\end{align*}
where in the first inequality, we are taking an expectation conditional on the first $p$ iterations of Algorithm~\ref{alg:stoNTP}, and in the second inequality, we use Theorem~\ref{thm:stoIHT}. Iterating the expectation obtains the desired result.
\end{proof}

\section{Numerical Experiments}
\label{sec:results}
Various experiments on linear and nonlinear measurements are conducted to evaluate the proposed performance. We adopt the following two comparison metrics: (1) relative error $\norm{\vx-\vx^*}_2/\norm{\vx^*}_2$ where $\vx$ is an approximation of the ground truth vector $\vx^*$; (2) success rate which is a percentage of successful cases with correctly identified support out of the total trials.
Numerical experiments were run on a 2015 Macbook Pro in MATLAB R2017b with 8 GB RAM and a 2.7 GHz Dual-Core Intel Core i5.

\subsection{Linear Measurements}
First, we illustrate the performance of StoNTP on the least squares problem, where the objective function is given as in~\eqref{eq:linobj}. We generate $\vx^\star\in\R^{800}$ as a normalized sparse Gaussian random vector with $10$ uniformly distributed nonzero entries. The sensing matrix $\mA\in\R^{100\times 800}$ is generated as a Gaussian random matrix with normalized columns. We then get the random measurements as $\vy = \mA \vx^\star$. We set the maximal number of iterations as 150 and the batch size as 10. The algorithm stops either when it achieves the maximal number of iterations or the loss function $\|\vy-\mA\vx\|_2 \leq 10^{-3}$. To see the best choice of the regularization parameter $\alpha$, we first fix the step size $\lambda = 2$. The value of the loss function and distance between the estimated $\vx$ and $\vx^\star$ evaluated at each iteration and versus the running time are illustrated in Fig.~\ref{fig:StoNTP_a}. It can be seen that the best choice is $\alpha = 1$. Next, we repeat the experiment by fixing $\alpha = 1$ and test on a variety of $\lambda$ values. As is shown in Fig.~\ref{fig:StoNTP_lamb}, the best step size is $\lambda = 2$.  We also compare our StoNTP algorithm with the NTP algorithm. It can be seen from Fig.~\ref{fig:NTPvsStoNTP} that the StoNTP algorithm significantly outperforms the NTP algorithm. In addition, we test the success rates for NTP and StoNTP for various parameters in Fig.~\ref{fig:tune_para}.

\begin{figure}
\centering
\includegraphics[width = 0.44\textwidth]{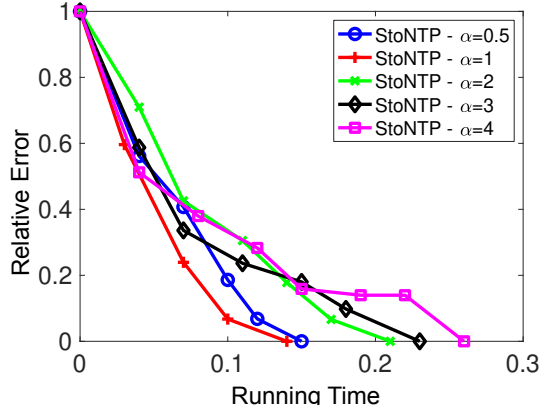}
\caption{Test StoNTP with various $\alpha$'s: $m=100$, $n=800$, $k=10$, $\lambda = 2$. Batch size for StoNTP is 10. The best choice is $\alpha = 1$.}
\label{fig:StoNTP_a}
\end{figure}

\begin{figure}
\centering
\includegraphics[width = 0.44\textwidth]{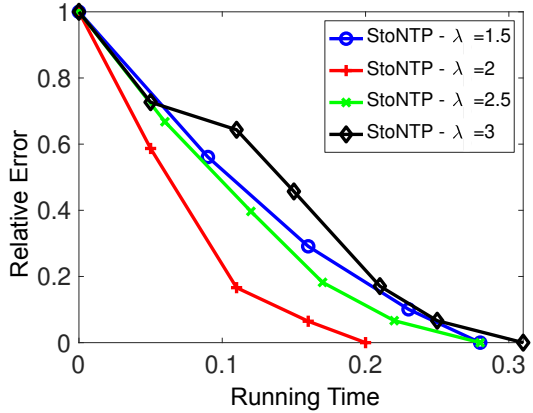}
\vspace{-8pt}
\caption{Test StoNTP with different $\lambda$'s: $m=100$, $n=800$, $k=10$, $\alpha = 1$. Batch size for StoNTP is 10. {The best step size is $\lambda = 2$.}}
\label{fig:StoNTP_lamb}
\end{figure}

\begin{figure}
\centering
\includegraphics[width = 0.45\textwidth]{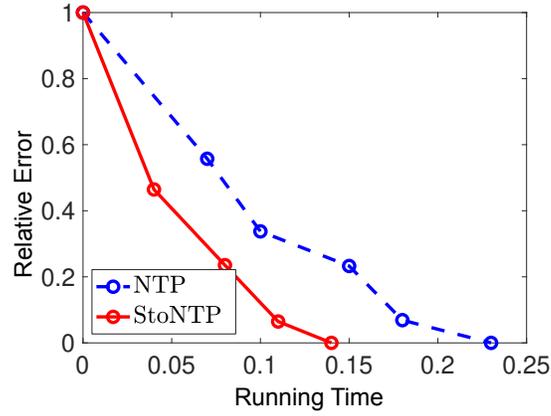}
\caption{NTP vs StoNTP: $m=100$, $n=800$, $k=10$. For NTP, we choose $\lambda = 2$, $\alpha = 5$. For StoNTP, we choose $\lambda = 2$, $\alpha = 1$. The batch size for StoNTP is 20.}
\label{fig:NTPvsStoNTP}
\end{figure}

\begin{figure}
\centering
\begin{tabular}{cc}
\includegraphics[width = 0.44\textwidth]{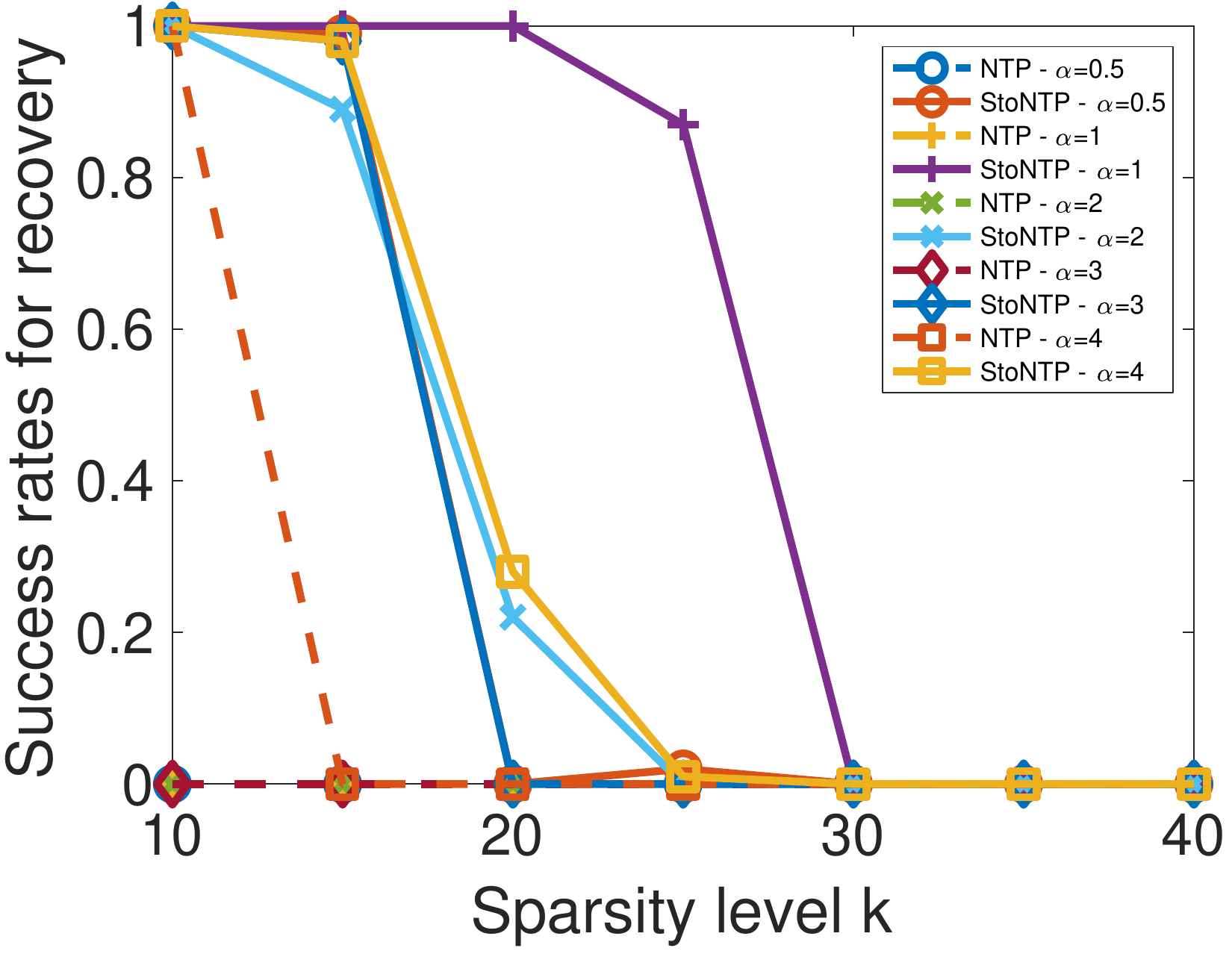}
&
\includegraphics[width = 0.43\textwidth]{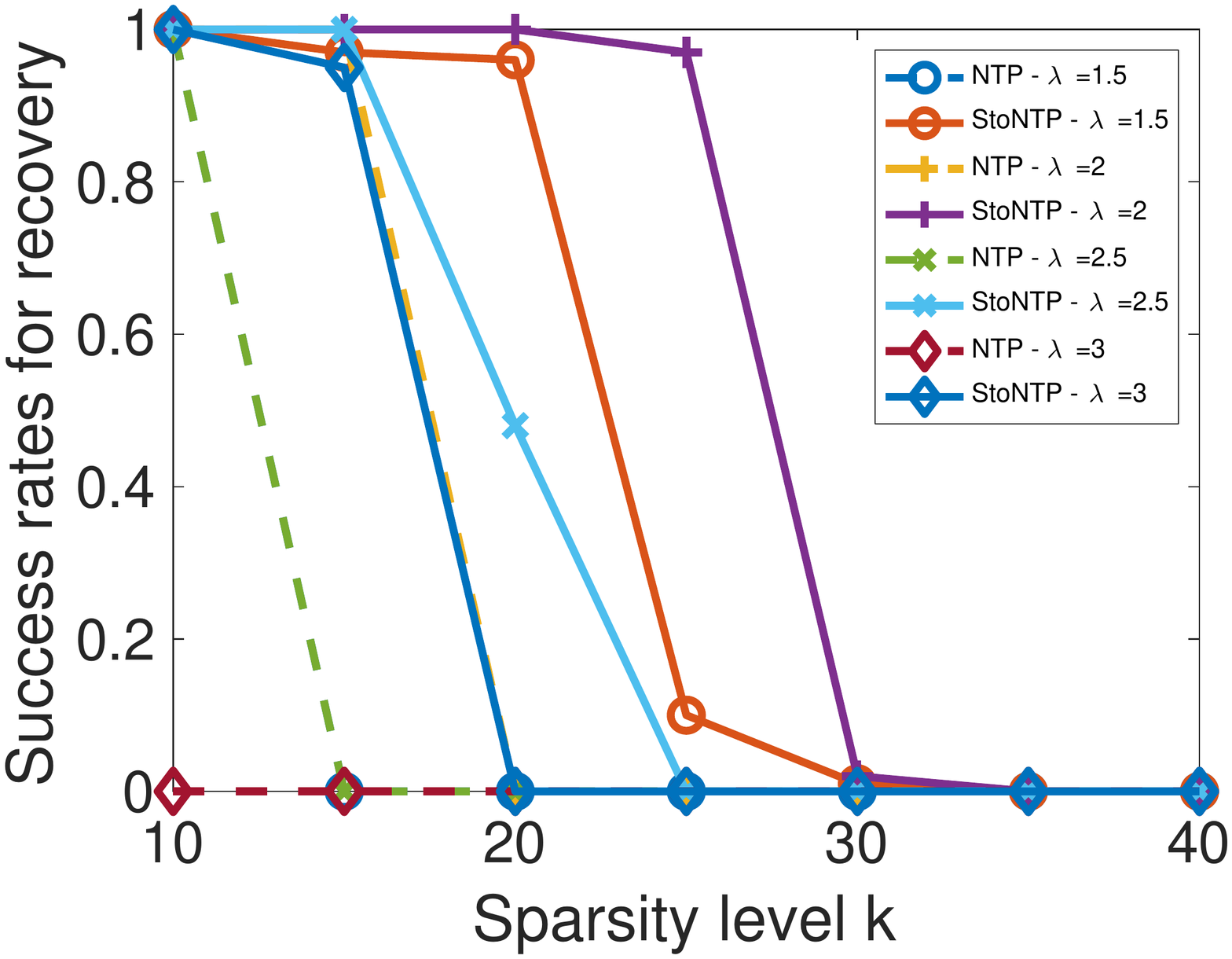}\\
StoNTP: $bs=30$, $\lambda=2$ &
NTP: $bs=30$, $\lambda=2$, StoNTP $\alpha=1$
\end{tabular}\\[10pt]
\begin{tabular}{c}
 \includegraphics[width = 0.44\textwidth]{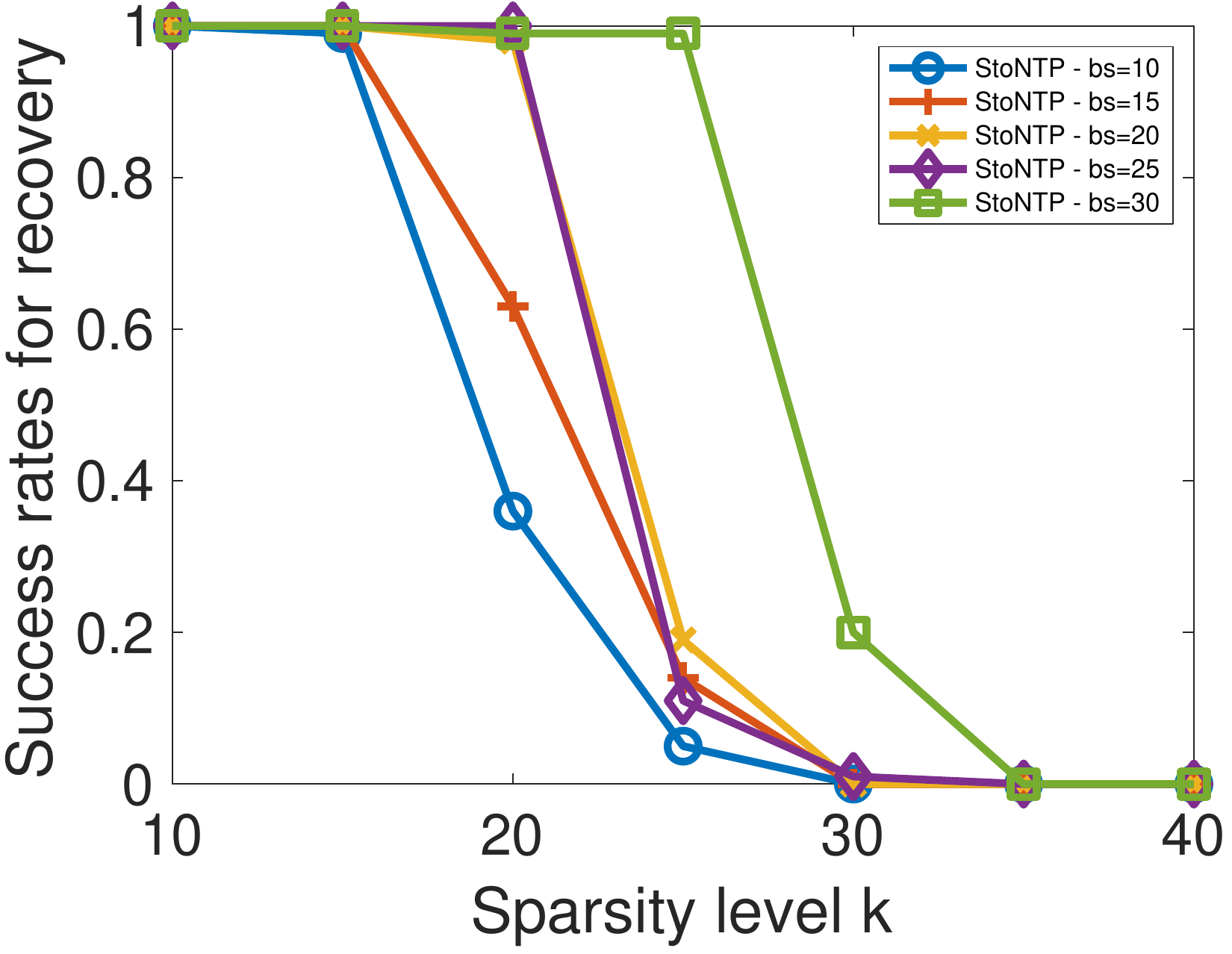}\\
 StoNTP: $\alpha=1$, $\lambda=2$.
\end{tabular}
\caption{Success rates for NTP and StoNTP with different parameters: $m=100$, $n=800$, $\alpha\in\{0.5,1,2,3,4\}$, $\lambda\in\{1.5,2,2.5,3\}$, batch size $bs\in\{10,15,20,25,30\}$. }
\label{fig:tune_para}
\end{figure}

\subsection{Nonlinear Measurements}
We extend the measurements from the linear case to the nonlinear case, and consider the $L_2$-regularized logistic regression model and support vector machine (SVM). In what follows, we set $a=5$,
$\epsilon = 10^{-3}$, and batch size to be 20. We also select each component function uniformly.

First, we consider the logistic regression model with the following objective function
$
f(\vx)=\frac1m\sum_{i=1}^m\log(1+\exp(-2y_i(\va_i\vx))),
$
where $\va_i$ represents the $i$-th row from the measurement matrix $A\in\R^{100\times 800}$ and classifiers $y_i\in\{-1,1\}$ such that $y_i=1$ with probability $p=\exp(\va_i\vx^*)/(1+\exp(\va_i\vx^*))$ for a fixed $\vx^*$ (the solution). The performance of our algorithms is shown in Fig.~\ref{fig:logreg}. In this experiment, the vectors $a_i$ are drawn i.i.d. from a Gaussian distribution and normalized to have unit norm. We set $m = 100$, $n = 800$, and $k=40$. For NTP the step size is $\lambda = 10$ and the step size for StoNTP is $\lambda = 30$. As shown in Fig.~\ref{fig:svm}, both StoNTP and NTP can attain a zero misclassification error. Notably, StoNTP can obtain a smaller loss than NTP.

\begin{figure}[H]
    \centering    \includegraphics[width=.45\textwidth]{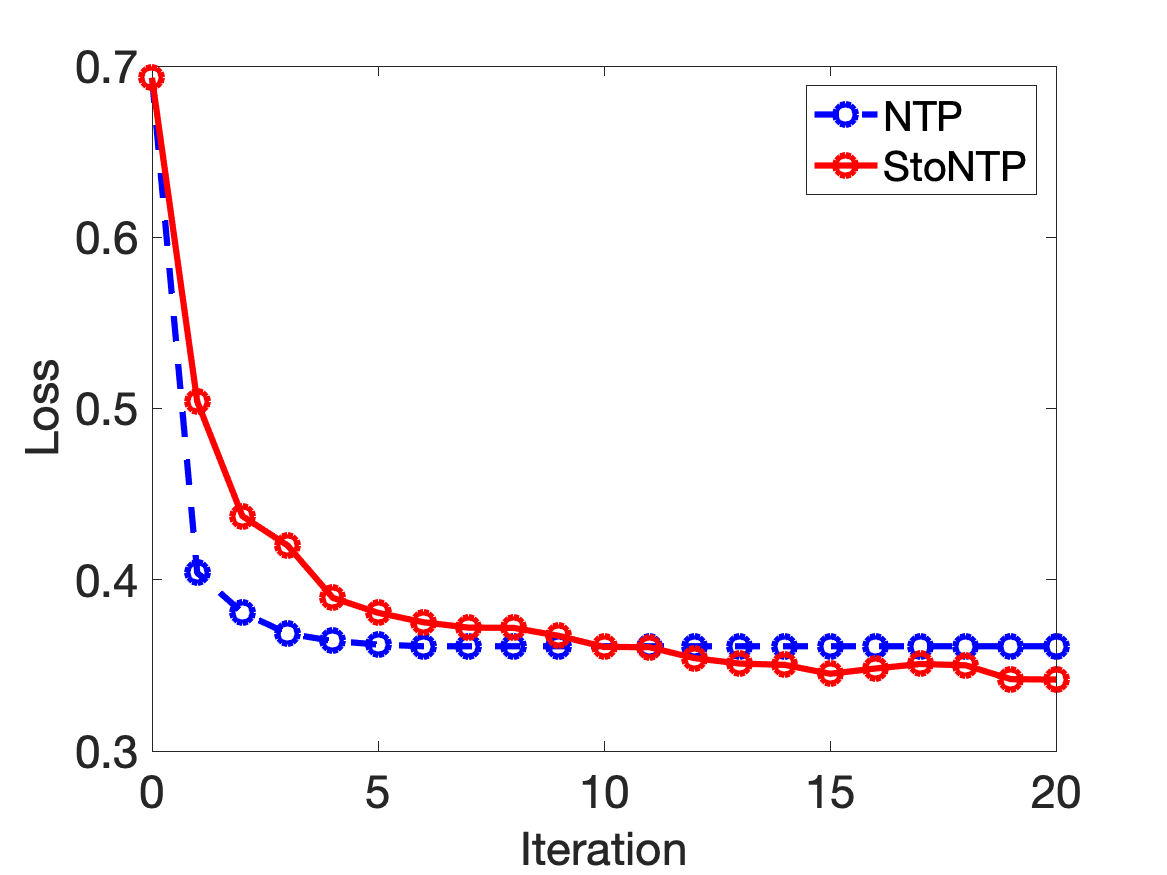}
    \includegraphics[width=.45\textwidth]{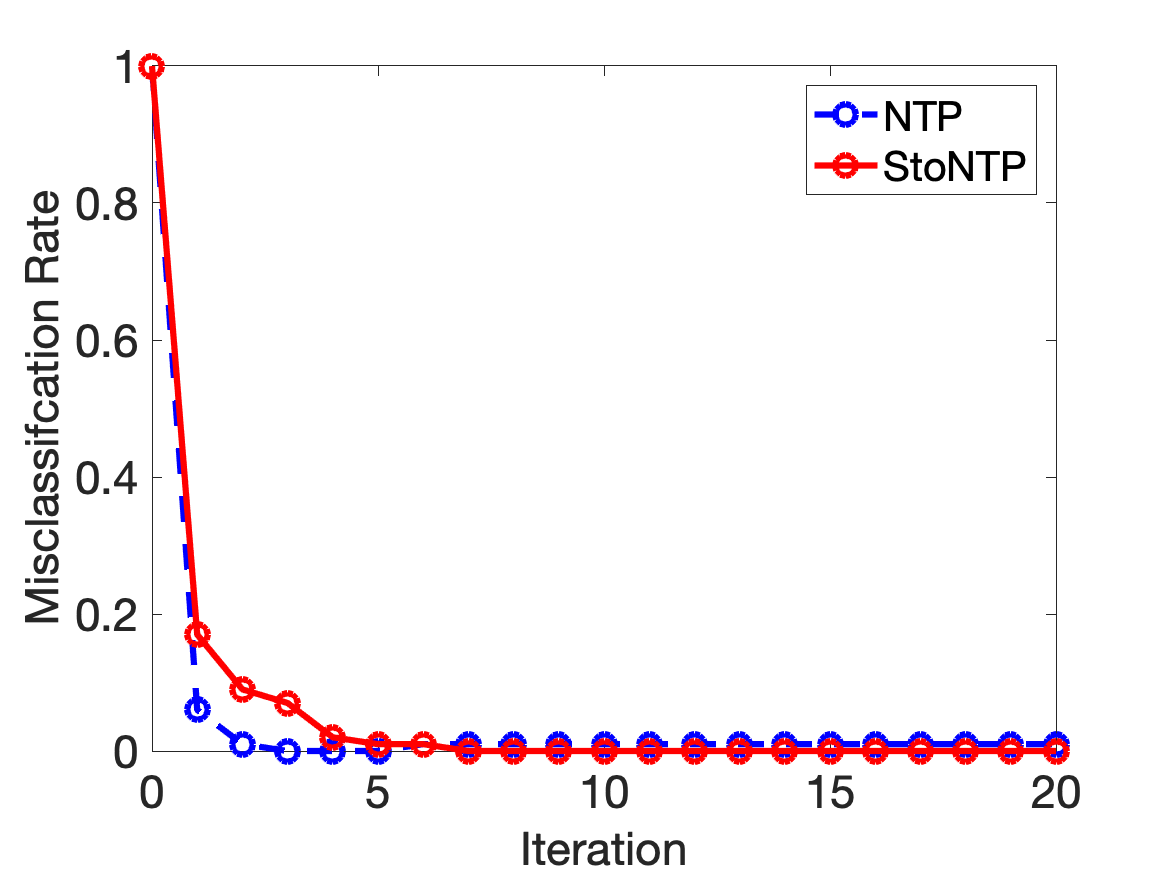}
    \vspace{-8pt}\caption{Comparison of NTP and StoNTP for logistic regression: convergence of loss (top) and misclassification rate (bottom).}
    \label{fig:logreg}
\end{figure}

Next, we consider the SVM problem with
$
f(\vx)=\frac1{2m}\sum_{i=1}^m(\max\{0,1-y_i\va_i\vx\})^2
$
where $\va_i$'s, $y_i$'s are defined as before. We set $m = 100$, $n = 800$, and $k=40$, and obtained the results in Fig.~\ref{fig:svm}. For NTP the step size is $\lambda = 10$, and the step size for StoNTP is $\lambda = 20$.  As shown in Fig.~\ref{fig:svm}, both StoNTP and NTP can attain a zero misclassification error. Notably, StoNTP can obtain a smaller loss than NTP. The vectors $a_i$ are drawn i.i.d. from a Gaussian distribution and normalized to have a unit norm.

\begin{figure}[H]
    \centering
    \includegraphics[width=.45\textwidth]{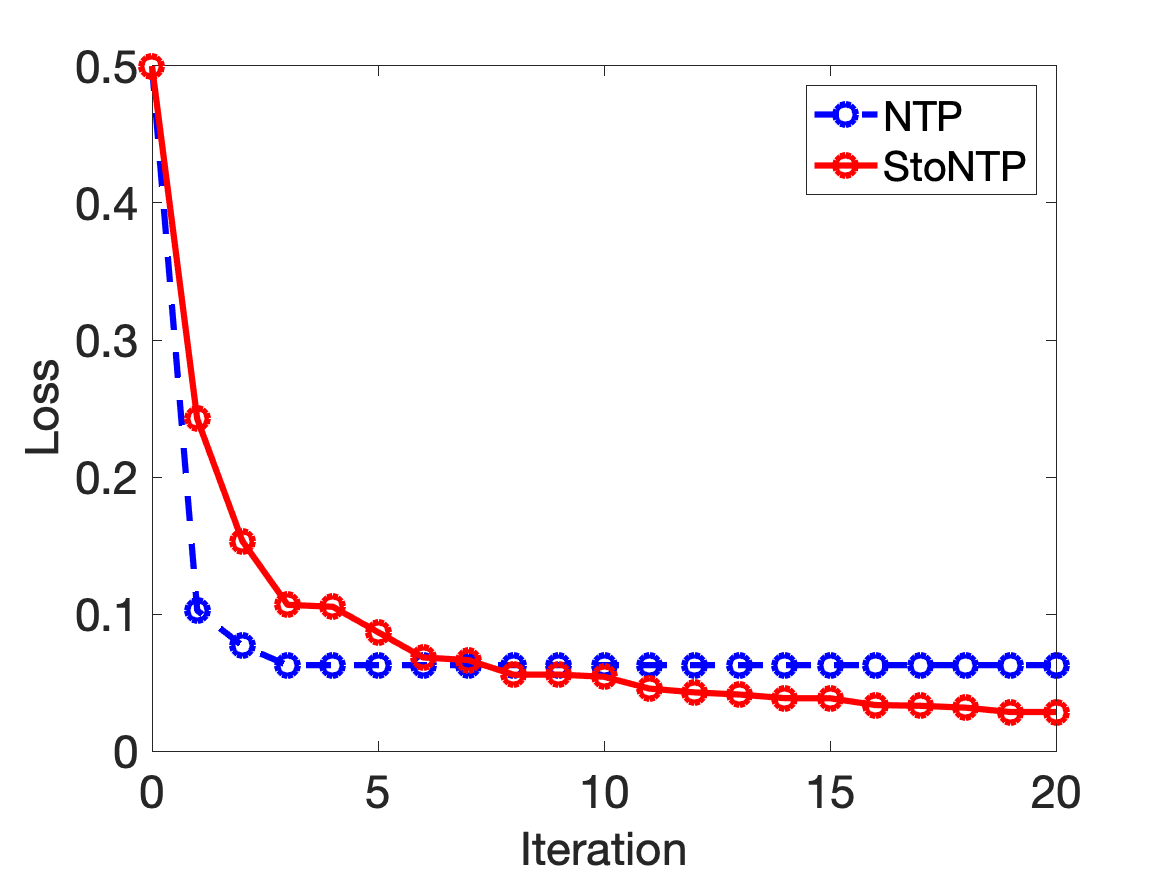}
    \includegraphics[width=.45\textwidth]{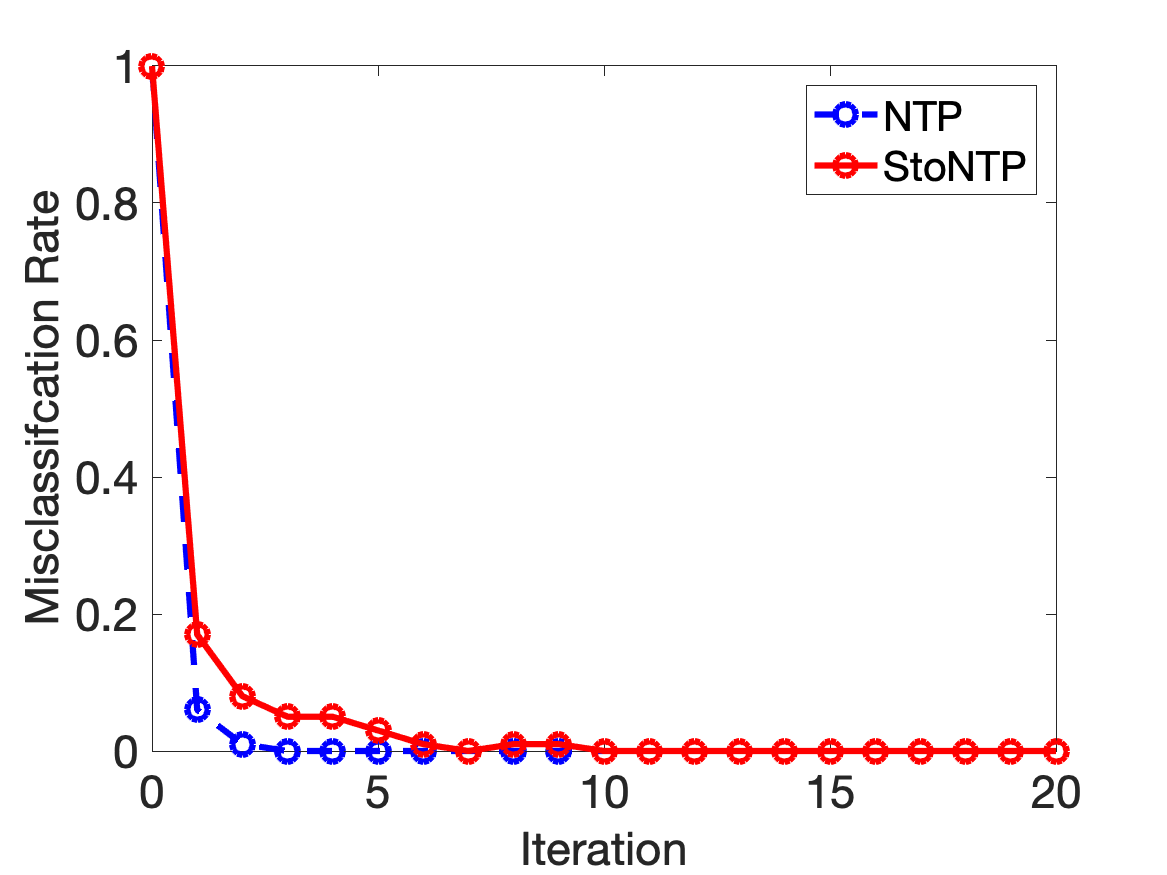}
    \vspace{-8pt}\caption{Comparison of NTP and StoNTP for SVM: convergence of loss (top) and misclassification rate (bottom).}
    \label{fig:svm}
\end{figure}

\section{Conclusion}
In this paper, we propose two stochastic natural thresholding  algorithms, i.e., StoNT and StoNTP, by extending the natural thresholding from the linear case to a general one and from the deterministic version to the stochastic one. Numerical simulations on linear and nonlinear measurements have shown the great potential of our algorithms in improving the recovery accuracy and computational efficiency.

\section*{Acknowledgements} DN was partially supported by NSF DMS 2011140 and NSF DMS 2108479. JQ was supported by NSF DMS 1941197.

\bibliographystyle{plain}
\bibliography{ref.bib}

\begin{thebibliography}{10}

\bibitem{blumensath2008iterative}
T~Blumensath and M~Davies.
\newblock Iterative hard thresholding for sparse approximations: The journal of
  fourier analysis and applications, 14, no. 5-6, 629--654, 2008.

\bibitem{blumensath2009iterative}
Thomas Blumensath and Mike~E Davies.
\newblock Iterative hard thresholding for compressed sensing.
\newblock {\em Applied and computational harmonic analysis}, 27(3):265--274,
  2009.

\bibitem{bredies2008linear}
Kristian Bredies and Dirk~A Lorenz.
\newblock Linear convergence of iterative soft-thresholding.
\newblock {\em Journal of Fourier Analysis and Applications}, 14(5):813--837,
  2008.

\bibitem{donoho1995noising}
David~L Donoho.
\newblock De-noising by soft-thresholding.
\newblock {\em IEEE transactions on information theory}, 41(3):613--627, 1995.

\bibitem{foucart2011hard}
Simon Foucart.
\newblock Hard thresholding pursuit: an algorithm for compressive sensing.
\newblock {\em SIAM Journal on numerical analysis}, 49(6):2543--2563, 2011.

\bibitem{herrity2006sparse}
Kyle~K Herrity, Anna~C Gilbert, and Joel~A Tropp.
\newblock Sparse approximation via iterative thresholding.
\newblock In {\em 2006 IEEE International Conference on Acoustics Speech and
  Signal Processing Proceedings}, volume~3, pages III--III. IEEE, 2006.

\bibitem{nguyen2017linear}
Nam Nguyen, Deanna Needell, and Tina Woolf.
\newblock Linear convergence of stochastic iterative greedy algorithms with
  sparse constraints.
\newblock {\em IEEE Transactions on Information Theory}, 63(11):6869--6895,
  2017.

\bibitem{nguyenunified}
NH~Nguyen, S~Chin, and TD~Tran.
\newblock A unified iterative greedy algorithm for sparsityconstrainted
  optimization. 2013.
\newblock Available:
  \url{https://sites.google.com/site/namnguyenjhu/gradMP.pdf}.

\bibitem{zhao2020optimal}
Yun-Bin Zhao.
\newblock Optimal k-thresholding algorithms for sparse optimization problems.
\newblock {\em SIAM Journal on Optimization}, 30(1):31--55, 2020.

\bibitem{zhao2022natural}
Yun-Bin Zhao and Zhi-Quan Luo.
\newblock Natural thresholding algorithms for signal recovery with sparsity.
\newblock {\em IEEE Open Journal of Signal Processing}, 3:417--431, 2022.

\end{thebibliography}
\end{document}